\definecolor{lightblue}{rgb}{0.5,0.5,1.0}
\definecolor{darkred}{rgb}{0.5,0,0}
\definecolor{darkgreen}{rgb}{0,0.5,0}
\definecolor{darkblue}{rgb}{0,0,0.5}
\DeclareMathOperator{\id}{id}
\DeclareMathOperator{\Aut}{Aut}
\DeclareMathOperator{\Refx}{Ref}
\let\Sel\undefined
\DeclareMathOperator{\Sel}{Sel}
\DeclareMathOperator{\Inv}{Inv}
\DeclarePairedDelimiter{\ceil}{\lceil}{\rceil}
\definecolor{blue0}{RGB}{196, 196, 255}
\definecolor{blue1}{RGB}{140, 140, 255}
\definecolor{blue2}{RGB}{112, 112, 255}
\definecolor{blue3}{RGB}{0, 0, 255}
\tikzset{
	hatch distance/.store in=\hatchdistance,
	hatch distance=10pt,
	hatch thickness/.store in=\hatchthickness,
	hatch thickness=2pt
}
\pgfqpoint{\hatchdistance}{\hatchdistance}}
\tikzset{pruned/.style={fill=gray,pattern=flexible hatch,hatch distance=3pt,hatch thickness=0.5pt,pattern color=lightgray}}
\tikzset{snake it/.style={decorate, decoration=snake, segment length=4mm, thick, draw=blue}}
\newcommand{\Traces}{\textsc{Traces}}
\newcommand{\nauty}{\textsc{nauty}}
\newcommand{\bliss}{\textsc{bliss}}
\newcommand{\dejavu}{\textsc{dejavu}}
\newcommand{\conauto}{\textsc{conauto}}
\DeclareMathOperator{\Dev}{Dev}
\title{Engineering a Fast Probabilistic Isomorphism Test\thanks{The research leading to these results has received funding from the European Research Council (ERC) under the European Union’s Horizon 2020 research and innovation programme (EngageS: grant agreement No.\ 820148).}}
\author{Markus Anders \and Pascal Schweitzer}
\date{}
\begin{document}

\maketitle

\begin{abstract} 
We engineer a new probabilistic Monte-Carlo algorithm for isomorphism testing. Most notably, as opposed to all other solvers, it implicitly exploits the presence of symmetries without explicitly computing them.

We provide extensive benchmarks, showing that the algorithm outperforms all state-of-the-art solutions for isomorphism testing on most inputs from the de facto standard benchmark library for isomorphism testing. On many input types, our data not only show improved running times by an order of magnitude, but also reflect a better asymptotic behavior. 

Our results demonstrate that, with current algorithms, isomorphism testing is in practice easier than the related problems of computing the automorphism group or canonically labeling a graph. The results also show that probabilistic algorithms for isomorphism testing can be engineered to outperform deterministic approaches, even asymptotically.
\end{abstract}

\cfoot{\thepage}

\section{Motivation} \label{sec:introduction}
The \emph{graph isomorphism problem} is concerned with deciding whether two given graphs are structurally equivalent. 
It captures the essence of symmetry detection in combinatorial structures. 
Two different, strongly related problems are commonly considered in practice: first, the \emph{automorphism group} problem demands computation of the entire symmetry group of a given graph. 
Secondly, the problem of computing a \emph{canonical labeling} asks us to produce an ordering of the vertices of a given input graph, so that isomorphic inputs yield equivalent ordered graphs.

From a theoretical point of view the graph isomorphism and automorphism group problems are polynomial time equivalent (see~\cite{DBLP:journals/ipl/Mathon79}). To reduce the isomorphism to the automorphism problem, one essentially computes the automorphism group of the disjoint union of the two input graphs.
The other direction is a Turing reduction, however, so to compute the automorphism group one requires repeated isomorphism tests of suitably manipulated graphs. 
Both problems reduce to the task of computing canonical labelings.
Many theoretical isomorphism testing algorithms, albeit sometimes with considerable extra effort, can be extended or modified to produce canonical labelings. For example the best known theoretical isomorphism algorithm due to Babai \cite{DBLP:conf/stoc/Babai16} can be extended to produce canonical labelings~\cite{DBLP:conf/stoc/Babai19}. 
However, the canonical labeling problem is not known to be polynomial-time reducible to the other two problems and thus it is the potentially harder problem.

In practice, the situation of the relationship between the problems is slightly different. Currently, all state-of-the-art tools are based on the so-called \emph{individualization-refinement} (IR) paradigm. 
The paradigm performs a form of backtracking leading to a \emph{search tree}, which can be of exponential size in the original graph. 
Algorithms then traverse and prune the search tree in certain manners to solve the three problems related to graph isomorphism mentioned above. 
Despite the fact that the practical tools mostly traverse the same search tree, there is diversity among them. 
Differences between practical tools manifest in the choice of traversal strategies, pruning techniques and various optimization tricks. 

With the exception of \conauto{} \cite{DBLP:conf/wea/Lopez-PresaCA13}, modern practical tools have no specific mode for isomorphism testing, hence either computing entire automorphism groups or canonical labelings \cite{JunttilaKaski:ALENEX2007, DBLP:conf/tapas/JunttilaK11,McKay201494,Darga:2004:ESS:996566.996712,conauto:webpage}.
In their implementation, it is often the case that computing canonical labelings is significantly more expensive than automorphism group computation, since there are fewer known algorithmic techniques and tricks that can be applied (see \cite{DBLP:conf/birthday/KatebiSM12,McKay201494}). 
However, as described in \cite{McKay:userguide},
the fastest practical way to decide the isomorphism problem for most inputs currently comprises first in computing canonical labelings for both input graphs and then comparing the outputs. 
Of course we can alternatively employ the aforementioned reduction from isomorphism testing to automorphism group computation by computing the automorphism group on the disjoint union of graphs. 
One might hope that this leads to faster computation since we can use more algorithmic techniques and tricks as just mentioned.
But actually computing the automorphism groups of such disjoint unions of graphs turns out to be even more expensive for most cases.

A crucial difference is that the isomorphism problem only requires us to know whether \emph{one} isomorphism exists, while automorphism group based algorithms produce \emph{all} isomorphisms and automorphisms. 
The entire automorphism group is commonly managed using the Schreier-Sims algorithm (see~\cite{MR1970241}).
Essentially, automorphisms are collected in a table, which can grow to a quadratic size in the original graph.
The potential quadratic blowup for collecting automorphisms is however not limited to automorphism group computation but also happens when computing canonical labelings. Indeed, the solvers rely on finding the entire automorphism group to then in turn prune search for the canonical form of the graph.   
In fact, even the state-of-the-art isomorphism test mentioned earlier \cite{DBLP:conf/wea/Lopez-PresaCA13} collects automorphisms when testing for isomorphism. 

Overall, it seems that the current practical reality is, in a sense, upside down: testing for isomorphism should be easier than providing automorphism groups or canonical labeling, but turns out to be the computationally most expensive task -- inadvertently solving the other two supposedly ``harder'' problems in the process.

\textbf{Contribution.} In this paper, we engineer a new probabilistic Monte-Carlo algorithm for isomorphism testing. 
Most notably, it implicitly exploits the presence of automorphisms without explicitly computing automorphism groups. In particular we can avoid the use of the Schreier-Sims algorithm. Some aspects of our algorithm are related to an older randomized approach for isomorphism testing~\cite{DBLP:conf/alenex/KutzS07} which is however not competitive with current tools. In comparison, with our approach there is a crucial difference in that the underlying algorithm is based around the \emph{bidirectional search} traversal strategy presented in \cite{theorypaper}. It features a provably superior theoretical worst-case running time over deterministic solvers in the IR-paradigm with exponential speed up.
It turns out that this superior running time not only manifests in worst-case complexity and not only in theory.
Indeed, we provide extensive benchmarks, showing that the algorithm outperforms all state-of-the-art solutions for isomorphism testing on most inputs from the de facto standard benchmark library for isomorphism testing~\cite{nautyTracesweb}.
On many input types, our data not only show improved running times by an order of magnitude, but also reflect a better asymptotic behavior. 

The algorithm itself is solely based on repeatedly probing random walks of the in\-dividual\-ization-\-re\-fine\-ment search trees.
Since isomorphic graphs imply isomorphic search trees, random walks either produce equally probable outcomes in both trees, or entirely distinct outcomes.
Based on this, the algorithm performs a probabilistic test for deciding isomorphism.  
By memorizing the results of all previous random walks, it exploits the ideas of the well-known \emph{collision problem}, since any repetition in the results of walks advances the progress of the algorithm. 

To summarize, this paper paints a refined picture of the practical graph isomorphism landscape: firstly, we establish isomorphism testing as the easiest of the practical problems.  
This makes canonical labeling the hardest practical problem, followed by automorphism group computation and lastly isomorphism testing. 
Secondly, we show that probabilistic algorithms can be engineered to outperform deterministic approaches. This is in agreement with the  theoretical worst-cases analysis of IR-algorithms performed in~\cite{theorypaper}.

\section{Individualization-Refinement} \label{sec:preliminaries}
We follow the descriptions given in \cite{McKay201494}, giving a quick introduction into the individualization-refinement framework. 
The presentation is geared towards results necessary for the probabilistic isomorphism test presented in Section~\ref{sec:bidirectional}.

\paragraph{Graphs and groups.} An undirected, finite graph $G = (V, E)$ consists of a set of vertices $V \subseteq \mathbb{N}$ and a set of edges $E \subseteq \binom{V}{2}$. For simplicity, we let $V = \{1, \dots{}, n\}$. We denote by~$S_n$ the \emph{symmetric group} on~$\{1, \dots{}, n\}$.

A \emph{coloring} is a surjective map~$\pi : V \to \{1, \dots{}, k\}$. It maps the vertices of a graph to \emph{cells} $1, \dots{}, k$. The \emph{$i$-th cell} of $\pi$ is $\pi^{-1}(i) \subseteq V$. Consequently, $|\pi| = k$ denotes the number of cells in a given coloring. We call $\pi$ \emph{discrete} whenever $|\pi| = n$. Since a discrete coloring is then bijective, it also implicitly orders the vertices in~$V$.

A colored graph $(G, \pi)$ consists of a graph and a coloring. We require that isomorphisms and automorphisms of a colored graph must preserve colors, i.e., a vertex of a cell $c$ must be mapped to a vertex of cell $c$.

Let $G_1 = (V_1, E_1)$ and $G_2 = (V_2, E_2)$ denote two graphs. 
A bijection $\varphi : V_1 \to V_2$ is an \emph{isomorphism} whenever $G_1^\varphi := (\varphi(V_1), \varphi(E_1)) = (V_2, E_2) = G_2$ holds. 
If $G_1 = G_2$, we call $\varphi$ an \emph{automorphism} of $G_1$.
The set containing all automorphisms of a graph $G$ forms a permutation group under the composition operation, namely the \emph{automorphism group} $\Aut(G)$.

In the following, we only consider uncolored graphs for the sake of simplicity. 
Let us remark, however, that we could use exactly the same machinery for colored graphs (see \cite{McKay201494}).

\paragraph{Refinement.} In the following, we want to \emph{individualize} vertices and \emph{refine} colorings. Individualizing vertices in a coloring is a process that artificially forces the vertex to form its own singleton cell. We use $\nu \in V^*$ to denote a sequence of vertices. We use such a sequence to record which vertices have been individualized. The expression~$\nu.v$ denotes the sequence~$\nu$ appended by~$v\in V$.

A \emph{refinement} is a function $\Refx \colon G \times V^* \to \Pi$. Given a graph $G$ and sequence of vertices $\nu$, it must satisfy the following properties:
\begin{itemize}
	\item It is invariant under isomorphism, i.e., $\Refx(G^\varphi, \nu^\varphi) = \Refx(G, \nu)^\varphi$ holds for all $\varphi \in S_n$.
	\item It respects vertices in $\nu$ as being individualized, i.e., $\{v\}$ is a singleton cell in $\Refx(G, \nu)$ for all $v \in \nu$.
\end{itemize}

In practice, variants of the 1-dimensional Weisfeiler-Leman algorithm (commonly referred to as \emph{color refinement}) are used as refinement procedures. Intuitively they classify vertices according to their degree and the degrees of their neighbors and the degrees of the neighbors of the neighbors and so on. The refinement as used throughout paper is summarized in Algorithm~\ref{alg:refine}.

The algorithm overapproximates the orbit partition by first coloring vertices using their degree. 
Then, this information is propagated iteratively through the graph, partitioning colors further by considering the colors of neighbors.
The algorithm can be implemented in quasi-linear time, i.e., $\mathcal{O}((n+m) \log n)$ where $m$ is the number of edges in the graph.

A crucial property is that the algorithm partitions vertices in an isomorphism-invariant manner: this implies that whenever two input graphs are isomorphic the resulting partitioning must be equivalent. 
Conversely, whenever refinement results in differing partitions, the provided colored graphs can not be isomorphic.  Let us point out that for isomorphism invariance Lines~\ref{line:refine:split} and~\ref{line:refine:largest} need to be implemented isomorphism invariantly.

\begin{algorithm2e}[t] \label{alg:refine}
	\SetAlgoLined
	\SetAlgoNoEnd
	\caption[Refinement Procedure]{Basic Color Refinement}
	\Fn{\Refine{G, $\pi$, $\nu$}}{
		\SetKwInOut{Input}{Input}
		\SetKwInOut{Output}{Output}
		\Input{graph $G$, coloring $\pi$, list of vertices $\nu$}
		\Output{refined coloring $\pi'$}
		initialize empty stack $W$\;
		push all cells of $\pi$ and $\nu$ onto $W$\;
		\While{{\normalfont $W$ is non-empty}}{
			pop a cell $C$ from $W$\;
			\For{{\normalfont each cell $X$ containing a neighbor of a vertex in $C$}}{
				for each vertex in $X$ count its neighbors in $C$ \;
				split $X$ into $X_1, \dots{}, X_k$ in $\pi$, according to neighbor counts\; \label{line:refine:split}
				let $X_i$ be one of the largest cells of $X_1, \dots{}, X_k$\; \label{line:refine:largest}
				push all sets $X_1, \dots{}, X_k$ except $X_i$  onto $W$\; \label{line:refine:putstack}
				\lIf{$X \in W$}{replace $X$ in $W$ with $X_i$}
			}
		}
		\Return{$\pi$}
	}
\end{algorithm2e}

\paragraph{Cell Selector.} If refinement classifies all vertices into different cells, determining automorphisms and isomorphisms for graphs is easy, since cells have to be preserved. 
Otherwise, individualization is used to artificially single out a vertex inside a non-singleton cell. 
The task of a \emph{cell selector} is to isomorphism invariantly pick a non-singleton cell of the coloring. 
In the individualization refinement paradigm, all vertices of the selected cell will then be individualized one after the other using some form of backtracking. After individualization, refinement is applied again.
Formally, a cell selector is a function $\Sel \colon G \times V^* \to 2^V$ satisfying the following properties: 
\begin{itemize}
	\item It is invariant under isomorphism, i.e., $\Sel(G,\pi^\varphi) = \Sel(G,\pi)^\varphi$ holds for all $\varphi \in S_n$.
	\item If $\pi$ is discrete then $\Sel(G,\pi) = \emptyset$.
	\item If $\pi$ is not discrete then  $|\Sel(G,\pi)| > 1$ and $\Sel(G,\pi)$ is a cell of $\pi$.
\end{itemize}

\paragraph{Search Tree.} With the functions $\Refx$ and $\Sel$ at hand, we are now ready to define the \emph{search tree}. For a graph $G$ we use $\mathcal{T}_{(\Refx, \Sel)}(G)$ to denote the search tree of~$G$ with respect to refinement operator~$\Refx$ and cell selector~$\Sel$. The search tree is constructed as follows: each node of the search tree corresponds to a sequence of vertices of~$G$.
\begin{itemize}
	\item The root of $\mathcal{T}_{(\Refx, \Sel)}(G)$ is the empty sequence $\epsilon$.
	\item If $\nu$ is a node in $\mathcal{T}_{(\Refx, \Sel)}(G)$ and $C = \Sel(G,\Refx(G, \nu))$ holds, then its children are $\{\nu.v \; | \; v \in C \}$.
\end{itemize}

\noindent With $\mathcal{T}_{(\Refx, \Sel)}(G, \nu)$ we denote the subtree of $\mathcal{T}_{(\Refx, \Sel)}(G)$ rooted in $\nu$. We may omit the indices $\Sel$ and $\Refx$ if they are apparent from context. Note that leaves of a tree correspond to discrete colorings of the graph, and therefore to permutations of $V$.

We recite the following crucial facts on isomorphism invariance of the search tree as given in \cite{McKay201494}, which follows from the isomorphism invariance of $\Sel$ and $\Refx$:

\begin{lemma} \label{lem:tree_invariant} For a graph $G$ and $\varphi \in S_n$ we have $\mathcal{T}(G)^\varphi = \mathcal{T}(G^\varphi)$. 
\end{lemma}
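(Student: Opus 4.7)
The plan is to argue by induction on the depth of nodes in the search tree. Recall that every node of $\mathcal{T}(G)$ is a vertex sequence $\nu \in V^*$, so the natural meaning of $\mathcal{T}(G)^\varphi$ is the tree whose nodes are $\nu^\varphi$ for each node $\nu$ of $\mathcal{T}(G)$, with the parent-child relation inherited pointwise. The goal is then to verify that this tree coincides with the one obtained by running the recursive construction of the search tree on $G^\varphi$ instead of $G$.

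For the base case, the root of $\mathcal{T}(G)$ is the empty sequence $\epsilon$, and $\epsilon^\varphi = \epsilon$ is also the root of $\mathcal{T}(G^\varphi)$. For the inductive step, assume $\nu$ is a node of $\mathcal{T}(G)$ and $\nu^\varphi$ is the corresponding node of $\mathcal{T}(G^\varphi)$. By the definition of the search tree, the children of $\nu$ are $\{\nu.v : v \in C\}$ with $C = \Sel(G, \Refx(G, \nu))$, so after applying $\varphi$ pointwise the images of these children form the set $\{\nu^\varphi.\varphi(v) : v \in C\}$. Meanwhile, the children of $\nu^\varphi$ in $\mathcal{T}(G^\varphi)$ are $\{\nu^\varphi.v' : v' \in C'\}$ with $C' = \Sel(G^\varphi, \Refx(G^\varphi, \nu^\varphi))$.

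The two sets agree once one shows $C' = C^\varphi = \{\varphi(v) : v \in C\}$. This is exactly where the stipulated invariance properties come in: the invariance of $\Refx$ yields $\Refx(G^\varphi, \nu^\varphi) = \Refx(G, \nu)^\varphi$, and then the invariance of $\Sel$ yields $\Sel(G^\varphi, \Refx(G, \nu)^\varphi) = \Sel(G, \Refx(G, \nu))^\varphi = C^\varphi$. Chaining these two equalities gives $C' = C^\varphi$, so the two children sets coincide, which completes the induction.

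There is no real obstacle here; the only subtlety worth flagging is bookkeeping. One must be careful that the claim is an equality of trees as combinatorial objects, so the induction has to match children bijectively (not merely in cardinality), which is precisely what the set equality $\{\nu^\varphi.\varphi(v) : v \in C\} = \{\nu^\varphi.v' : v' \in C^\varphi\}$ provides. Everything else is a direct application of the isomorphism-invariance hypotheses on $\Sel$ and $\Refx$ recorded in the preceding definitions.
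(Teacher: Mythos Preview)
Your argument is correct and is exactly the standard unfolding of the claim. The paper itself does not give a proof of this lemma; it merely recites the statement from~\cite{McKay201494} and remarks that it follows from the isomorphism invariance of $\Sel$ and $\Refx$, which is precisely what your induction on depth makes explicit.
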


\begin{corollary} \label{lem:auto_tree_correspondence1} If $\nu$ is a node of $\mathcal{T}(G)$ and $\varphi \in \Aut(G)$, then $\nu^\varphi$ is a node of $\mathcal{T}(G)$ and $\mathcal{T}(G, \nu)^\varphi = \mathcal{T}(G, \nu^\varphi)$.
\end{corollary}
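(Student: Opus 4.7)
The plan is to derive the corollary directly from Lemma~\ref{lem:tree_invariant} by specializing to the case $\varphi \in \Aut(G)$. By definition of an automorphism we have $G^\varphi = G$, so substituting into the lemma yields
\[
\mathcal{T}(G)^\varphi \;=\; \mathcal{T}(G^\varphi) \;=\; \mathcal{T}(G).
\]
Since $\varphi$ acts on nodes of $\mathcal{T}(G)$ by componentwise application to the vertex sequences, this identity says exactly that the node set of $\mathcal{T}(G)$ is preserved under $\varphi$. In particular, if $\nu$ is a node of $\mathcal{T}(G)$ then so is $\nu^\varphi$, which settles the first claim.

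For the second claim, I would argue that the subtree rooted at a given node is determined by the recursive construction from $\Refx$ and $\Sel$, and show by induction on the depth of $\mathcal{T}(G,\nu)$ that $\mathcal{T}(G,\nu)^\varphi = \mathcal{T}(G,\nu^\varphi)$. The base case is when $\Refx(G,\nu)$ is discrete, in which case $\nu$ is a leaf, and by isomorphism-invariance of $\Refx$ (together with $G^\varphi = G$) the coloring $\Refx(G,\nu^\varphi) = \Refx(G,\nu)^\varphi$ is also discrete, so $\nu^\varphi$ is a leaf too. For the inductive step, let $C = \Sel(G,\Refx(G,\nu))$. Invariance of $\Sel$ and $\Refx$ gives $\Sel(G,\Refx(G,\nu^\varphi)) = C^\varphi$, so the children of $\nu^\varphi$ are $\{\nu^\varphi.w : w \in C^\varphi\} = \{(\nu.v)^\varphi : v \in C\}$, which is exactly the image under $\varphi$ of the children of $\nu$. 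Applying the inductive hypothesis to each child subtree then yields the desired equality.

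There isn't really a serious obstacle here — the whole proof is just a careful unpacking of definitions plus Lemma~\ref{lem:tree_invariant}. The only mildly delicate point is being precise about what $\mathcal{T}(G,\nu)^\varphi$ means as a labeled tree (namely, the tree whose nodes are the images of the nodes of $\mathcal{T}(G,\nu)$ under componentwise application of $\varphi$, with the induced parent-child relation), and checking that this matches the recursively built $\mathcal{T}(G,\nu^\varphi)$ level by level. Alternatively, one can sidestep the induction entirely by observing that $\mathcal{T}(G,\nu)$ is simply the subtree of $\mathcal{T}(G)$ hanging below $\nu$ and that the tree automorphism of $\mathcal{T}(G)$ induced by $\varphi$ maps the subtree below $\nu$ onto the subtree below $\nu^\varphi$.
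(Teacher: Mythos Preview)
Your proposal is correct and matches the paper's approach: the paper states this result as an immediate corollary of Lemma~\ref{lem:tree_invariant} (and of the isomorphism invariance of $\Sel$ and $\Refx$) without giving any further proof, so your argument simply spells out the details that the paper leaves implicit. In particular, your final observation---that the tree automorphism of $\mathcal{T}(G)$ induced by $\varphi$ carries the subtree below $\nu$ to the subtree below $\nu^\varphi$---is exactly the intended one-line justification.
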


\noindent We have yet to mention how the search tree is used to find automorphisms and isomorphisms of graphs. For this we read off automorphisms and isomorphisms from the tree by looking at the colorings of leaves:

\begin{lemma} \label{lem:leaf_auto_correspondence2} If $\nu$ and $\nu'$ are leaves of $\mathcal{T}(G)$, then there exists an automorphism $\varphi \in \Aut(G)$ such that $\nu = \varphi(\nu')$, if and only if $\Refx(G, \nu')^{-1} \cdot \Refx(G, \nu)$ is an automorphism of $G$. 
\end{lemma}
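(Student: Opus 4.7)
My plan is to prove the two directions separately, in each case relying on the isomorphism invariance of $\Refx$ together with Lemma~\ref{lem:tree_invariant}.

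For the forward direction, I would start from $\varphi \in \Aut(G)$ with $\nu = \varphi(\nu') = (\nu')^\varphi$. Since $\varphi \in \Aut(G)$ we have $G^\varphi = G$, and the invariance of $\Refx$ gives $\Refx(G,\nu) = \Refx(G^\varphi,(\nu')^\varphi) = \Refx(G,\nu')^\varphi$. Viewing the discrete colorings as bijections $V \to \{1,\dots,n\}$ under the convention $\pi^\varphi = \pi \circ \varphi^{-1}$, this identity rearranges to $\Refx(G,\nu')^{-1} \cdot \Refx(G,\nu) = \varphi^{-1}$, which lies in $\Aut(G)$ because $\varphi$ does.

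For the backward direction, I would set $\sigma := \Refx(G,\nu')^{-1} \cdot \Refx(G,\nu)$, assume $\sigma \in \Aut(G)$, and propose $\varphi := \sigma^{-1} \in \Aut(G)$ as the witnessing automorphism. By Corollary~\ref{lem:auto_tree_correspondence1} the sequence $(\nu')^\varphi$ is again a leaf of $\mathcal{T}(G)$, and a direct computation using the invariance of $\Refx$ yields $\Refx(G,(\nu')^\varphi) = \Refx(G,\nu') \circ \varphi^{-1} = \Refx(G,\nu') \circ \sigma = \Refx(G,\nu)$. Thus $(\nu')^\varphi$ and $\nu$ are leaves of the same search tree producing the same discrete coloring.

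The remaining step, and the main obstacle, is to upgrade this equality of colorings to an equality of the leaves themselves, since it does not follow purely formally from the two abstract axioms of $\Refx$. I would handle it by induction on the path length, peeling off individualizations one at a time: at every internal node the invariance of $\Sel$ pinpoints the cell from which the next vertex is chosen, and the specifics of Algorithm~\ref{alg:refine} then identify which vertex of that cell was individualized at step $i$ through the ``fresh'' singleton color it receives, whose position in the coloring depends only on $i$ and the isomorphism type of the prefix. This forces $\varphi(\nu'_i) = \nu_i$ for every $i$, so $\varphi$ witnesses the desired equality and completes the proof.
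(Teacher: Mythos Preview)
Your forward direction is essentially the paper's: both derive, from the invariance of $\Refx$ and $G^\varphi=G$, an identity between $\Refx(G,\nu)$, $\Refx(G,\nu')$ and $\varphi$, and conclude that $\Refx(G,\nu')^{-1}\cdot\Refx(G,\nu)\in\Aut(G)$. The paper obtains $\varphi'=\varphi$ while you obtain $\varphi'=\varphi^{-1}$; this is purely a convention clash about how $\pi^\varphi$ is composed and does not affect the conclusion.

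For the backward direction the paper is much terser than you. It simply asserts that, once $\varphi'\in\Aut(G)$, we have $\nu'=\varphi'(\nu)$ ``since $\nu$ and $\nu'$ are individualized in their respective coloring and must be mapped to each other''. In other words, the paper uses directly that the $i$-th individualized vertex $\nu_i$ occupies the \emph{same} singleton color in $\Refx(G,\nu)$ as $\nu'_i$ does in $\Refx(G,\nu')$, so that the permutation $\varphi'$ (which matches colors between the two discrete colorings) sends $\nu_i$ to $\nu'_i$. You take a slightly longer route: you first show $\Refx(G,(\nu')^{\varphi})=\Refx(G,\nu)$ and then argue, by induction along the path and using $\Sel$ together with Algorithm~\ref{alg:refine}, that two leaves of $\mathcal{T}(G)$ with identical discrete refinements must in fact coincide. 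Both arguments hinge on exactly the same underlying fact---that the color of the $i$-th individualized vertex is determined isomorphism-invariantly by the prefix---which, as you correctly note, does not follow from the two abstract axioms of $\Refx$ alone. The paper takes this for granted in one sentence; your inductive argument makes it explicit. So your proof is correct, follows the same overall line, and is more careful on precisely the point the paper glosses over.
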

\begin{proof} Set $\varphi' = \Refx(G, \nu')^{-1} \cdot \Refx(G, \nu)$, which is a well-defined permutation on $V$ since $\nu$ and $\nu'$ are leaves.
	
	If $\varphi$ is an automorphism with $\nu = \varphi(\nu')$, then $\Refx(G, \nu') \cdot \varphi = \Refx(G, \nu)$ holds. But then,
	\begin{align*}
		\varphi = \Refx(G, \nu')^{-1} \cdot \Refx(G, \nu') \cdot \varphi =\\ \Refx(G, \nu')^{-1} \cdot  \Refx(G, \nu) = \varphi',
	\end{align*}
	\noindent proving the first direction.
	
	Assume now $\varphi' \in \Aut(G)$. Then $\nu' = \varphi'(\nu)$ holds since $\nu$ and $\nu'$ are individualized in their respective coloring and must be mapped to each other. 
\end{proof}
Combining Lemma~\ref{lem:tree_invariant} and Lemma~\ref{lem:leaf_auto_correspondence2} shows that isomorphisms between graphs can be found similarly.
For a fixed $\varphi \in S_n$, we call $v' \in \mathcal{T}(G^{\varphi})$ an \emph{occurrence} of $v \in \mathcal{T}(G)$ in $\mathcal{T}(G^{\varphi})$ whenever there is an isomorphism~$\varphi'$ from~$G$ to~$G^{\varphi}$ for which~$\varphi'(v') = v$ holds.

\paragraph{Invariants.}
We define the notion of \emph{node invariants}. A node invariant $\Inv \colon G \times V^* \to I$ is a function mapping nodes of the tree to some totally ordered set $I$. We require some further properties:
\begin{itemize}
	\item The invariant must be invariant under isomorphism, i.e., we require $\Inv(G, \nu_1) = \Inv(G^\varphi, \nu_1^\varphi)$ for all $\varphi \in S_n$.
	\item If $|\nu_1| = |\nu_2|$ and $\Inv(G, \nu_1) < \Inv(G, \nu_2)$, then for all leaves $\nu_1' \in \mathcal{T}(G, \nu_1)$ and $\nu_2' \in \mathcal{T}(G, \nu_2)$ we require $\Inv(G, \nu_1') < \Inv(G, \nu_2')$.
\end{itemize}

\noindent For any invariant $\Inv$, the following holds:
\begin{lemma} \label{lem:invariant_pruning} Let $\nu, \nu'$ be leaves of $\mathcal{T}(G)$. If there is an automorphism $\varphi \in \Aut(G)$ such that $\nu = \varphi(\nu')$, then $\Inv(G, \nu) = \Inv(G, \nu')$ holds.
\end{lemma}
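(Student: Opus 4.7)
The proof should be essentially a one-line application of the first defining property of a node invariant, so the plan is to keep it short and make sure the notational bookkeeping matches the paper's conventions.

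My plan is to instantiate the isomorphism-invariance clause of $\Inv$ with the given automorphism $\varphi$. Concretely, applied to $\nu_1 := \nu'$, that clause yields $\Inv(G, \nu') = \Inv(G^\varphi, \nu'^\varphi)$. Since $\varphi \in \Aut(G)$, we have $G^\varphi = G$, so the right-hand side simplifies to $\Inv(G, \nu'^\varphi)$. The hypothesis $\nu = \varphi(\nu')$ means exactly that $\nu = \nu'^\varphi$ (the image of the sequence under $\varphi$ componentwise), so $\Inv(G, \nu'^\varphi) = \Inv(G, \nu)$, and the chain of equalities gives the claim.

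The second defining property of $\Inv$ (the subtree monotonicity condition) is not needed here; it will be used elsewhere to justify pruning by invariant comparisons at internal nodes, not for this lemma. The only subtlety worth spelling out in the write-up is that $\nu = \varphi(\nu')$, as defined earlier for sequences, coincides with $\nu'^\varphi$ in the exponent notation used in the invariance axiom, so that the axiom is applicable without further argument. There is no real obstacle; the lemma is essentially a sanity-check consequence of the definition, and the task in the proof is just to exhibit the substitutions cleanly.
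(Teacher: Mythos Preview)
Your proposal is correct and matches the paper's own proof, which is the one-line chain $\Inv(G,\nu)=\Inv(G^\varphi,\nu'^\varphi)=\Inv(G,\nu')$ obtained from the isomorphism-invariance axiom together with $G^\varphi=G$ and $\nu'^\varphi=\nu$. Your remark that the second defining property of $\Inv$ is unnecessary here is also accurate.
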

\begin{proof} This follows from the equalities~$\Inv(G, \nu) = \Inv(G^\varphi, \nu'^\varphi) = \Inv(G, \nu')$.
\end{proof} 
\section{Algorithmic Foundation} \label{sec:bidirectional}
In this section, we present the probabilistic bidirectional search algorithm for testing isomorphism of two graphs. 
The overall algorithm is based on the bidirectional traversal strategy presented in \cite{theorypaper}. In that paper traversal strategies for search algorithms in the context of symmetries are theoretically analyzed. However, this is done in an abstract search tree model. It is shown that some randomized search strategies asymptotically outperform the best possible deterministic traversal strategies. We describe now a realization of the strategy within the individualization-refinement framework. 

We then show that the algorithm exploits automorphisms without explicitly computing any part of the automorphism group. This crucially enables the algorithm to efficiently decide graph isomorphism without implicitly computing the automorphism group. We should note that the exploitation of automorphisms cannot be captured by the model presented in~\cite{theorypaper}. 

\subsection{No Isomorphism?~Probably} \label{sec:bidirectional:algorithm}

\begin{algorithm2e}[t] \label{alg:random_walk}
	\SetAlgoLined
	\SetAlgoNoEnd
	\caption[Random Walk of the Search Tree]{Random Walk in the Search Tree}
	\Fn{\RandomWalk{$G$}}{
		\SetKwInOut{Input}{Input}
		\SetKwInOut{Output}{Output}
		\Input{graph $G$}
		\Output{a random leaf of the search tree}
		$base$  := ()\;
		$c$  := \RRef{G, $[v \mapsto 1]$, base}\;
		$cell$     := \SSel{col}\;
		\While{c $\neq \emptyset$}{
			pick $v$ of $cell$ uniformly at random\;
			$base$ := $base.v$ \tcp*{append v to base}
			$c$ := \RRef{$G, c, base$}\;
			$cell$     := \SSel{$c$}\;
		}
		\Return{col}\;
	}
\end{algorithm2e}

The foundation of the probabilistic isomorphism test are \emph{random walks} in the underlying individualization-refinement search trees. 
The procedure is described in Algorithm~\ref{alg:random_walk}:
a random walk is performed by repeatedly refining and individualizing a random vertex of the target cell until the coloring becomes discrete, i.e., a leaf of the search tree is found.
This constitutes a random root-to-leaf walk of the individualization refinement search tree.

First, we need to make the following observation: assume we have two isomorphic graphs $G_1$, $G_2$. 
Now, due to isomorphism invariance, their respective search trees $\mathcal{T}(G_1)$ and $\mathcal{T}(G_2)$ are also isomorphic (Lemma~\ref{lem:tree_invariant}). 
Assume we fix some leaf $\tau \in \mathcal{T}(G_1)$ and try to find occurrences of it through random walks of the tree. 
Towards finding $\tau$, we always perform two random walks: one in $\mathcal{T}(G_1)$ and one in $\mathcal{T}(G_2)$. 
Using our assumption that the trees \emph{are isomorphic}, we can observe that finding an occurrence of $\tau$ in $\mathcal{T}(G_1)$ or in $\mathcal{T}(G_2)$ is \emph{equally likely}. Contrarily, if the trees \emph{are not isomorphic}, i.e., if the graphs are not isomorphic, we are \emph{only} able to find occurrences of $\tau$ in $\mathcal{T}(G_1)$.

Algorithm~\ref{alg:random_iso} describes the \emph{probabilistic bidirectional search}, which is based on this observation. 
The algorithm improves upon using just a single leaf $\tau$ by memorizing two sets of leaves $L_1$ and $L_2$ for comparison. 
If a leaf is discovered in $\mathcal{T}(G_j)$ that is not an occurrence of a previously found leaf, it is added to $L_j$ and is used for subsequent testing. 
Whenever a leaf is an occurrence of a previously found leaf, it either reveals an isomorphism or automorphism:
In the case where an isomorphism is unveiled, we are done and simply terminate returning the isomorphism. 
Otherwise, we have discovered an automorphism. 
After a certain number of automorphisms have been accumulated, the algorithm determines that the graphs are non-isomorphic within the given error bound. 
As discussed previously, if graphs were isomorphic, there is an equal probability to find automorphisms and isomorphisms. Hence, we are highly unlikely to uncover many automorphisms without also uncovering an isomorphism. Figure~\ref{fig:algorithm} illustrates Algorithm~\ref{alg:random_iso}.
Let us now formally prove its correctness:

\begin{figure}
	\centering
	\begin{tikzpicture}[scale=0.5]
		\node (r0) at ( 0.0,  0.0) {}; 
		\node (s0) at (-3.0, -4.0) {}; 
		\node (s1) at ( 3.0, -4.0) {}; 
		\node (si1) at (-2.0, -4.0) {};
		\node (si2) at (-1.5, -4.0) {};
		\node (si3) at ( 0, -4.0) {}; 
		\node (si4) at ( 5.5, -4.0) {}; 
		\node (si5) at ( 7.0, -4.0) {}; 
		\node (si6) at ( 9.0, -4.0) {}; 
		
		\fill[fill=gray!20,draw] (r0.center)--(s0.center)--(s1.center)--(r0.center);
		
		\node (T1) at  (-2, 0) {$\mathcal{T}(G_1)$};
		
		\node (r01) at ( 0.0 + 7,  0.0) {}; 
		\node (s01) at (-3.0 + 7, -4.0) {}; 
		\node (s11) at ( 3.0 + 7, -4.0) {}; 
		\fill[fill=gray!20,draw] (r01.center)--(s01.center)--(s11.center)--(r01.center);
		
		\node (T2) at  (7 + 2, 0) {$\mathcal{T}(G_2)$};
		
		\draw[color=black, line width=1.5pt,] (si1) to [out=290, in=250] (si3) node [label=below:{\tiny \hspace{0.1cm} auto?}] {}; 
		\draw[color=black, line width=1.5pt, ] (si1) to [out=290, in=250] (si4) node [label=below:{\tiny \hspace{0.2cm} iso?}] {}; 
		
		\draw[color=black, line width=1.5pt,densely dotted] (r0) to [out=250, in=90] (si1); 
		\draw[color=black, line width=1.5pt,densely dotted] (r0) to [out=270, in=90] (si2); 
		\draw[color=black, line width=1.5pt,densely dotted] (r0) to [out=290, in=90] (si3); 
		
		\draw[color=black, line width=1.5pt,densely dotted] (r01) to [out=250, in=90] (si4); 
		\draw[color=black, line width=1.5pt,densely dotted] (r01) to [out=270, in=90] (si5); 
		\draw[color=black, line width=1.5pt,densely dotted] (r01) to [out=290, in=90] (si6); 
		
		\draw[color=black, fill=white]  (r01) circle (.15);
		\draw[color=black, fill=white]  (r0) circle (.15);
		
		\draw[color=black, fill=orange]  (si1) circle (.15);
		\draw[color=black, fill=orange]  (si2) circle (.15);
		\draw[color=black, fill=orange]  (si3) circle (.15);
		\draw[color=black, fill=orange]  (si4) circle (.15);
		\draw[color=black, fill=orange]  (si5) circle (.15);
		\draw[color=black, fill=orange]  (si6) circle (.15);
	\end{tikzpicture}
	\caption{The probabilistic bidirectional search algorithm simultaneously samples uniform random leaves in both trees. It then tests for automorphisms within a tree and isomorphisms across trees to perform the probabilistic test.} \label{fig:algorithm}
\end{figure}
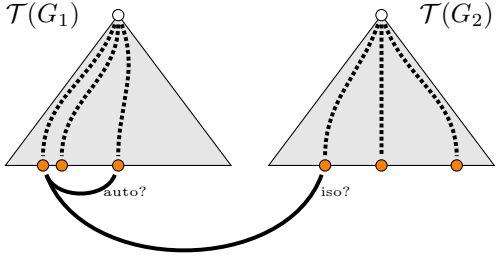

\begin{algorithm2e}[t!] \label{alg:random_iso}
	\SetAlgoLined
	\SetAlgoNoEnd
	\caption[Probabilistic Bidirectional Search]{Probabilistic Bidirectional Search in Individualization-Refinement}
	\Fn{\RandomIso{$G_1$, $G_2$, $\epsilon$}}{
		\SetKwInOut{Input}{Input}
		\SetKwInOut{Output}{Output}
		\Input{graphs $G_1, G_2$ and probability $\epsilon$}
		\Output{isomorphism $\varphi$ between the graphs with probability at least $1 - \epsilon$ if it exists, $\bot$ otherwise}
		$c$    := $0$\;
		$L_1 := L_2 := \emptyset$\;
		\While{$c \leq \ceil{-\log_2(\epsilon)}$}{
			$f_{(aut, 1)}$ := $f_{(aut, 2)}$ := $\mathtt{false}$\;
			$l_1$ := \RandomWalk{$G_1$}\;
			$l_2$ := \RandomWalk{$G_2$}\;
			\lIf{$l_1 \cdot l_2^{-1}(G_2) = G_1$}{\Return{$l_1 \cdot l_2^{-1}$}}
			\For{$i \in \{1, 2\}$}{ \label{alg:random_iso:search}
				\For{$l' \in L_{(3-i)}$}{
					$\varphi_{(iso, i)}$ := $l_i \cdot {l'}^{-1}$\;
					$\varphi_{(aut, 3-i)}$ := $l_{(3-i)} \cdot {l'}^{-1}$\;
					\lIf{$\varphi_{(iso, i)}(G_{3-i}) = G_{i}$}{
						\Return{$\varphi_{(iso, i)}$} 
					}
					\lIf{$\varphi_{(aut, 3-i)}(G_{(3-i)}) = G_{(3-i)}$}{
						$f_{(aut, 3-i)} := \mathtt{true}$
					}
				}
			}
			\lIf{$\neg f_{(aut, 1)}$}{$L_1 := L_1 \cup \{l_1\}$}
			\lIf{$\neg f_{(aut, 2)}$}{$L_2 := L_2 \cup \{l_2\}$}
			\lIf{$f_{(aut, 1)} \vee f_{(aut, 2)}$}{$c$ += $1$}
		}
		\Return{$\bot$}\;
	}
\end{algorithm2e}

\begin{lemma} Given graphs $G_1, G_2$ and probability $\epsilon$, Algorithm~\ref{alg:random_iso} produces an isomorphism $\varphi$ (such that $G_1 = \varphi(G_2)$) with probability at least $1 - \epsilon$ if it exists and returns $\bot$ otherwise.
\end{lemma}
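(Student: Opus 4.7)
The plan is to split on whether $G_1 \cong G_2$, treating the non-isomorphic case as a short correctness-and-termination argument and focusing the probabilistic analysis on the isomorphic case.

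If $G_1 \not\cong G_2$, every returned candidate $\varphi$ is guarded by an explicit verification $\varphi(G_{3-i}) = G_i$, so the algorithm can never return a spurious isomorphism. For termination I would argue that $L_1$ and $L_2$ accumulate at most one leaf per $\Aut(G_j)$-orbit of $\mathcal{T}(G_j)$ (Corollary~\ref{lem:auto_tree_correspondence1}), and since each tree has only finitely many leaf orbits, after finitely many iterations both $L_j$ are saturated. From that point on, every iteration satisfies $f_{(aut,1)} \vee f_{(aut,2)}$, so $c$ strictly increases in each iteration and eventually exceeds $\lceil -\log_2 \epsilon \rceil$, returning~$\bot$.

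For the isomorphic case, Lemma~\ref{lem:tree_invariant} gives $\mathcal{T}(G_1) \cong \mathcal{T}(G_2)$, and together with Corollary~\ref{lem:auto_tree_correspondence1} this lets me identify the leaves of both trees with a common finite orbit space $\Omega$. Since refinement and the cell selector are isomorphism invariant, the two independent calls to \textsc{RandomWalk} produce orbits $O(l_1)$ and $O(l_2)$ that are i.i.d.\ with a common distribution $p$ on $\Omega$. Writing $A_j = \{O(l) \mid l \in L_j\}$, a short invariant-preservation argument at the end of each iteration shows $A_1 \cap A_2 = \emptyset$ throughout: any overlap would have been caught either by the direct comparison $l_1 \cdot l_2^{-1}$ or by one of the two inner loops over $L_{3-i}$. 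Using Lemma~\ref{lem:leaf_auto_correspondence2} (extended to isomorphisms via Lemma~\ref{lem:tree_invariant}), the per-iteration events decompose as $E_{iso} = \{O(l_1) \in A_2\} \cup \{O(l_2) \in A_1\} \cup \{O(l_1) = O(l_2)\}$ and $E_{aut} = \{O(l_1) \in A_1\} \cup \{O(l_2) \in A_2\}$.

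Independence of $l_1, l_2$ together with the common distribution $p$ yields $\Pr[\{O(l_1) \in A_2\} \cup \{O(l_2) \in A_1\}] = p(A_1) + p(A_2) - p(A_1)p(A_2) = \Pr[E_{aut}]$, so $\Pr[E_{iso}] \geq \Pr[E_{aut}]$. Elementary algebra rewrites this as $\Pr[E_{iso} \setminus E_{aut}] \geq \Pr[E_{aut} \setminus E_{iso}]$, which is equivalent to $\Pr[E_{iso} \mid E_{iso} \cup E_{aut}] \geq 1/2$, uniformly in the current state. Since $c$ is incremented only in iterations where $E_{aut} \setminus E_{iso}$ occurs, the subsequence of ``detection'' iterations $E_{iso} \cup E_{aut}$ is stochastically dominated by Bernoulli$(1/2)$ trials, so the algorithm returns $\bot$ with probability at most $(1/2)^{\lceil -\log_2 \epsilon \rceil + 1} \leq \epsilon/2 \leq \epsilon$. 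The main obstacle will be the clean justification that $O(l_1)$ and $O(l_2)$ are truly i.i.d.\ over a shared orbit space and that the disjointness invariant $A_1 \cap A_2 = \emptyset$ is preserved by the interleaved update pattern of the algorithm; once these structural facts are in place, the probability bound reduces to a compact symmetry-and-independence calculation.
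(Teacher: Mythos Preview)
Your argument is correct and follows the same core idea as the paper: segment the run into ``detection'' events (an automorphism or isomorphism is witnessed), argue that isomorphism is at least as likely as automorphism at each detection by symmetry between the two trees, and conclude a $(1/2)^{\lceil -\log_2\epsilon\rceil}$ tail bound. The paper's proof states this symmetry informally (``a random leaf contained in $L_i$ can equally likely be found in $G_1$ or $G_2$''), whereas you make it precise via the common orbit space, the i.i.d.\ law of $(O(l_1),O(l_2))$, and the explicit identity $p(A_1)+p(A_2)-p(A_1)p(A_2)$ for both $\Pr[E_{aut}]$ and $\Pr[\{O(l_1)\in A_2\}\cup\{O(l_2)\in A_1\}]$. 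You also supply a termination argument for the non-isomorphic case (saturation of the orbit sets $A_j$), which the paper's proof omits entirely; this is a genuine addition. One minor remark: the disjointness invariant $A_1\cap A_2=\emptyset$ that you carry is correct but not actually used in your probability inequality, so you could drop it without loss.
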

\begin{proof} 
	First, observe that a discovered isomorphism is certified before being returned, which ensures that if the algorithm returns an isomorphism, it is always correct. The algorithm can therefore only fail by not finding an isomorphism despite its existence.

	We interpret the computation as a sequence of \emph{tests}. 
	A test performs random walks of the search trees until one automorphism or isomorphism is found. 
	Hence, it is a sequence of $i$ iterations: in each iteration $j < i$, neither $l_1$ nor $l_2$ uncover an isomorphism or automorphism. 
	The algorithm neither terminates, nor is $c$ incremented. 
	In iteration $i$, an automorphism or isomorphism is found. 
	Now, note that when $G_1$ or $G_2$ are isomorphic, a random leaf contained in $L_i$ can equally likely be found in $G_1$ or $G_2$. 
	Hence, finding an automorphism or isomorphism in a test is equally likely, i.e., the probability is $\frac{1}{2}$ for each outcome. 
	Anytime we find an automorphism but no isomorphism, we increment $c$ by $1$. We terminate when $c$ reaches $d$. 
	Assuming the graphs are isomorphic, the probability of this outcome is thus bounded by $(\frac{1}{2})^d$.
\end{proof}

\noindent The avid reader may remark that the algorithm as presented neither performs \emph{invariant pruning} nor \emph{automorphism pruning}, which are common practice in all state-of-the-art tools. 
However, both omissions are intentional.
In the next section, we provide a runtime analysis which shows that the algorithm already \emph{implicitly} prunes using automorphisms: it becomes proportionally faster in the presence of automorphisms.
Furthermore, the algorithm has a sub-linear worst-case runtime in the size of the search tree, which depends on explicitly \emph{not} always using invariant pruning. Regarding invariant pruning,
Section~\ref{sec:practical:deviation} introduces an adapted way of applying invariants in the probabilistic setting.
In Section~\ref{sec:benchmarks}, benchmarks then show that traditional invariant pruning as performed by deterministic solvers is rarely required for the probabilistic approach.  

\subsection{No Automorphism Pruning? Yes}
 \label{sec:bidirectional:runtime}
A crucial point we want to make is that the algorithm as presented performs perfect automorphism pruning.

Let us first discuss how automorphism pruning is performed in other IR-algorithms. 
In these, automorphisms are usually discovered by finding multiple occurrences of leaves. 
There is a close relationship between the automorphism group size and the number of occurrences of a particular leaf.

\begin{lemma} \label{lem:num_occurrences}
	Let $G$ be a graph. For every leaf $l \in \mathcal{T}(G)$, there exist $|\Aut(G)|$ occurrences of $l$.
\end{lemma}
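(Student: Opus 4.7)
The plan is to establish a bijection between $\Aut(G)$ and the set of occurrences of $l$ in $\mathcal{T}(G)$ via the map $\Phi : \varphi \mapsto \varphi(l)$, where $\varphi(l)$ means applying $\varphi$ entrywise to the sequence $l$. First I would verify that $\Phi$ is well-defined: by Corollary~\ref{lem:auto_tree_correspondence1}, for every $\varphi \in \Aut(G)$ the sequence $\varphi(l)$ is again a node of $\mathcal{T}(G)$, and since $|\varphi(l)| = |l|$, it is a leaf. Moreover $\varphi^{-1} \in \Aut(G)$ sends $\varphi(l)$ back to $l$, so $\varphi(l)$ is an occurrence of $l$ in the sense defined above (taking the ambient isomorphism $G \to G$ to be $\varphi^{-1}$).

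Next I would argue surjectivity, which is essentially immediate from the definition. If $l'$ is an occurrence of $l$ in $\mathcal{T}(G)$, there exists $\varphi' \in \Aut(G)$ with $\varphi'(l') = l$, and then $l' = (\varphi')^{-1}(l) = \Phi((\varphi')^{-1})$.

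The main obstacle is injectivity, since this is the place that uses more than just formal manipulations: it exploits the fact that $l$ corresponds to a \emph{discrete} coloring. Suppose $\Phi(\varphi_1) = \Phi(\varphi_2)$, and set $\psi := \varphi_2^{-1}\varphi_1 \in \Aut(G)$. Then $\psi$ fixes each vertex of the sequence $l$ pointwise, i.e.\ $\psi(l) = l$. By the isomorphism invariance of $\Refx$ applied to the automorphism $\psi$, we have $\Refx(G, l) = \Refx(G, \psi(l)) = \Refx(G, l)^\psi$, so the discrete coloring $\pi := \Refx(G, l)$ is $\psi$-invariant. Since $\pi$ is discrete, every cell is a singleton, so any color-preserving permutation must fix every vertex; thus $\psi = \id$ and $\varphi_1 = \varphi_2$.

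Combining the three steps, $\Phi$ is a bijection between $\Aut(G)$ and the set of occurrences of $l$ in $\mathcal{T}(G)$, which yields the claimed equality of cardinalities. I expect the injectivity step to be the only part worth spelling out in detail, as the other two are direct consequences of the definitions and of Corollary~\ref{lem:auto_tree_correspondence1}.
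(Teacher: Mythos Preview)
Your proof is correct and follows the same approach as the paper: the paper also argues that each nontrivial automorphism $\varphi$ yields a distinct leaf $\varphi(l)$, accounting for $|\Aut(G)|$ occurrences. Your version is more detailed---in particular, the paper simply asserts $\varphi(l)\neq l$ for nontrivial $\varphi$ (citing isomorphism invariance) without spelling out the discreteness argument you give for injectivity, and it leaves surjectivity implicit in the definition of ``occurrence.''
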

\begin{proof}
	Let $l$ be a leaf of $\mathcal{T}(G)$. 
	Due to isomorphism invariance (Lemma~\ref{lem:tree_invariant}), applying a non-trivial automorphism $\varphi \in \Aut(G)\setminus\{\id\}$ yields a distinct leaf $\varphi(l) \neq l$. 
	This accounts for $|\Aut(G)|$ occurrences in total. 
\end{proof}

Suppose we have discovered a number of automorphisms. We compute the subgroup~$\Gamma$ generated by these. 
It is then possible to prune the search tree, essentially constructing the quotient~$\mathcal{T}/\Gamma$. This quotient is the graph defined on the orbits of the vertices of~$\mathcal{T}$ under~$\Gamma$. Two orbits are adjacent if they contain adjacent vertices.
Intuitively, if some node of the search tree is mapped to another node using one of the automorphisms then we can remove one of the nodes.
The best we could hope for here is to discover all automorphism cheaply to be able to prune as much of the search tree as possible. In that case we say that automorphism pruning has been applied exhaustively.
The following lemma describes the size of a search tree after automorphism pruning has been applied exhaustively.

\begin{lemma} \label{lem:full_auto}
	Let $G$ be a graph. 
	The quotient tree~$\mathcal{T}/\Aut(G)$ of the search tree modulo the automorphism group has 
	$|L(\mathcal{T}/\Aut(G))| = \frac{|L(\mathcal{T})|}{|\Aut(G)|}$ leaves.
\end{lemma}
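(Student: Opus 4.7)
The plan is to recognize that the leaves of the quotient tree are exactly the orbits of $\Aut(G)$ acting on $L(\mathcal{T})$, and then to show this action is free on leaves, so that every orbit has size exactly $|\Aut(G)|$. The count $|L(\mathcal{T})|/|\Aut(G)|$ then follows by a standard orbit-counting argument.

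First I would set up the group action. By Corollary~\ref{lem:auto_tree_correspondence1}, every $\varphi \in \Aut(G)$ sends nodes of $\mathcal{T}(G)$ to nodes of $\mathcal{T}(G)$, and it preserves depth (since it just permutes the entries of a sequence $\nu$). In particular it maps leaves to leaves, so we get an induced action of $\Aut(G)$ on $L(\mathcal{T})$. The quotient tree $\mathcal{T}/\Aut(G)$ has, by construction, one node per orbit at each level, so its leaves are in bijection with the orbits of this action on $L(\mathcal{T})$.

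Next I would invoke Lemma~\ref{lem:num_occurrences} to argue the action is free on leaves. Fix a leaf $l \in L(\mathcal{T})$. The $|\Aut(G)|$ occurrences of $l$ are precisely the images $\{\varphi(l) \mid \varphi \in \Aut(G)\}$, i.e., the orbit of $l$. Since there are $|\Aut(G)|$ of them (all distinct), the stabilizer of $l$ in $\Aut(G)$ must be trivial, and hence every orbit on $L(\mathcal{T})$ has size exactly $|\Aut(G)|$.

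Finally, since the orbits partition $L(\mathcal{T})$ into blocks of uniform size $|\Aut(G)|$, the number of orbits—equivalently, the number of leaves of $\mathcal{T}/\Aut(G)$—is exactly $|L(\mathcal{T})|/|\Aut(G)|$. The only subtle point, and the one I would make sure to spell out, is the identification of ``occurrences'' in Lemma~\ref{lem:num_occurrences} with the orbit under $\Aut(G)$; everything else is bookkeeping of a free group action.
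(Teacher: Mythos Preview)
Your proposal is correct and follows essentially the same approach as the paper: identify the leaves of the quotient tree with the $\Aut(G)$-orbits on $L(\mathcal{T})$, then use Lemma~\ref{lem:num_occurrences} to conclude that every orbit has size $|\Aut(G)|$. The paper's proof is a two-sentence version of exactly this argument; your write-up is simply more explicit about the group action being free and about the identification of ``occurrences'' with orbit elements.
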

\begin{proof} 
	Note that a leaf in the quotient search tree~$\mathcal{T}/\Aut(G)$ is an equivalence class of leaves of the original search tree.
	Let $l$ be a leaf of $\mathcal{T}$. From Lemma~\ref{lem:num_occurrences} it follows that there are $|\Aut(G)| - 1$ other occurrences of $l$ equivalent to~$l$ under~$|\Aut(G)|$.
\end{proof}

Now, we show that Algorithm~\ref{alg:random_iso} implicitly exploits automorphisms without ever having to handle them explicitly.
Towards this goal we analyze its runtime.
Since termination in the algorithm depends on randomized events we consider expected runtime.  
In the implementation, a single path in the tree can be calculated in time $\mathcal{O}((n+m) \log n)$, where $m$ is the number of edges in the considered graph. 
In Section~\ref{sec:practical} we discuss how in practice comparing a new leaf to previously found leaves can be handled efficiently through hashing (instead of the linear search following Line~\ref{alg:random_iso:search}). 
Hence, we assume that this can be done in $\mathcal{O}(1)$.  
The only unknown is therefore the number of nodes visited in the search tree.

Overall, as typical for IR-type algorithms, we can therefore measure the runtime in the number of nodes visited in the search trees. 
Specifically, 
we describe the runtime of the algorithm on two graphs $G_1, G_2$ in terms of the sizes of the search trees $|\mathcal{T}(G_1)|, |\mathcal{T}(G_2)|$, their heights~$h(\mathcal{T}(G_1)), h(\mathcal{T}(G_2))$ and the desired error probability.
\begin{lemma} \label{lem:runtime} 
	Let $G_1, G_2$ be graphs.
	In the worst case, Algorithm~\ref{alg:random_iso} then visits in expectation 
	\begin{align*}
		\mathcal{O}(\ceil{-\log_2(\epsilon)} \cdot \max\{h(\mathcal{T}(G_1)), h(\mathcal{T}(G_2))\} \cdot \\ \min\{\sqrt{\frac{|\mathcal{T}(G_1)|}{|\Aut(G_1)|}}, \sqrt{\frac{|\mathcal{T}(G_2)|}{|\Aut(G_2)|}}\})
	\end{align*}
	nodes of the search tree.
\end{lemma}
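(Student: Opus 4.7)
The plan is to decompose the expected runtime into three factors: (i) the cost of one call to \RandomWalk, (ii) the expected number of outer-loop iterations per \emph{test} (a maximal run of consecutive iterations ending with either termination or an increment of~$c$), and (iii) the total number of tests executed.

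For (i), one random walk traces a single root-to-leaf path, so it visits at most $h(\mathcal{T}(G_i))$ nodes of $\mathcal{T}(G_i)$; each outer-loop iteration makes two such walks, one per tree, and (as discussed in Section~\ref{sec:bidirectional:runtime}) every subsequent leaf-to-leaf comparison is an $O(1)$ hash lookup.

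For (ii), I would set up a birthday-style collision argument. Let $k_i := |L(\mathcal{T}(G_i))|/|\Aut(G_i)|$ denote the number of $\Aut(G_i)$-orbits of leaves, which is well defined by Lemma~\ref{lem:full_auto}. By isomorphism invariance of $\Refx$ and $\Sel$, the distribution $\mu_i$ induced on leaves of $\mathcal{T}(G_i)$ by \RandomWalk assigns equal weight to any two leaves in the same orbit, and so descends to a distribution on the $k_i$ orbits. By Lemma~\ref{lem:leaf_auto_correspondence2} together with Lemma~\ref{lem:num_occurrences}, two leaves in the same tree yield an automorphism exactly when they belong to the same orbit, and, analogously, two leaves in opposite trees yield an isomorphism exactly when their orbits correspond under an isomorphism $G_1 \to G_2$. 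Thus the event ``iteration $t$ of a test triggers an automorphism or isomorphism collision'' is equivalent to a repeat being observed in an i.i.d.\ sampling process on at most $\min\{k_1,k_2\}$ atoms, where the smaller of the two trees dominates. Since the expected time to the first collision in such a process is $O(\sqrt{k})$ for \emph{every} distribution over $k$ atoms, the expected number of iterations per test is $O(\sqrt{\min\{k_1,k_2\}})$.

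For (iii), whenever a test ends by discovering an isomorphism the algorithm terminates immediately, so in the worst case every test ends instead by incrementing $c$; the outer loop then performs exactly $\lceil -\log_2(\epsilon)\rceil$ tests. Multiplying (i), (ii) and (iii) and using the trivial bound $|L(\mathcal{T}(G_i))| \leq |\mathcal{T}(G_i)|$ produces the claimed estimate. The main obstacle is step (ii): the distribution $\mu_i$ on orbits is determined by the (potentially wildly unbalanced) relative sizes of subtrees along the walk and can be far from uniform, so one must verify that the $O(\sqrt{k})$ birthday bound still applies. This is handled by the fact that for any probability distribution $p$ on $k$ atoms, $\|p\|_2^2 \geq 1/k$, combined with the standard identity that the expected time to a first collision in an i.i.d.\ sample from $p$ is of order $1/\|p\|_2$, giving the required $O(\sqrt{k})$ upper bound regardless of the shape of $\mu_i$.
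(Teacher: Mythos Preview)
Your proposal is correct and follows essentially the same three-factor decomposition (height $\times$ iterations per test $\times$ number of tests) and birthday-collision argument as the paper's own proof. Your handling of the non-uniform leaf distribution via $\|p\|_2^2 \geq 1/k$ together with $E[T] = O(1/\|p\|_2)$ is more explicit than the paper's, which simply asserts that non-uniformity can only increase the collision probability and thereby reduces to the uniform case.
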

\begin{proof}
This proof is similar to the proof of~\cite[Lemma 2]{theorypaper}, but there automorphisms are not part of the consideration.
To prove the claim, we calculate the expected number of leaves before termination. Note that we may consider the number of leaves instead of nodes by adding the multiplicative factor $\max\{h(\mathcal{T}(G_1)), h(\mathcal{T}(G_2))\}$ for the height of the search trees to our runtime. Furthermore, we assume that the algorithm terminates due to reaching the condition $c = \ceil{-\log_2(\epsilon)}$. This suffices to give an upper bound: earlier termination due to finding isomorphisms can only lead to a smaller expected number of leaves.

Let us now calculate the expected number of leaves explored before the first discovery of an automorphism. We assume that in the $i$-th iteration $L_1$ and $L_2$ each contain at least $i$ leaves.
Otherwise, a previous iteration already uncovered an automorphism or isomorphism: hence, the assumption suffices for a lower bound of the probability. Furthermore, we assume that the probability to find a leaf is uniform across all leaves: if probabilities are non-uniform, the chance for finding some leaf twice strictly increases.
The probability of finding an automorphism in $L_j$ (with $j \in \{1, 2\}$) within $i$ iterations is therefore at least $\frac{i}{|\mathcal{T}(G_j)|}$.

We now argue that the likelihood of finding an occurrence in the search tree through random walks is amplified by the size of the automorphism group. 
Let $G$ be a graph and $\tau \in L(\mathcal{T}(G))$.
In $\mathcal{T}(G)$, there are $|\Aut(G)|$ occurrences of $\tau$ (Lemma~\ref{lem:num_occurrences}). 
Let $p$ be the probability of finding the node $\tau$ through a random walk of $\mathcal{T}(G)$.
But due to isomorphism invariance of $\mathcal{T}(G)$ (Lemma~\ref{lem:tree_invariant}), the probability of finding a specific occurrence $\tau'$ of $\tau$ is also $p$. Hence, the probability to find \emph{any} occurrence is $|\Aut(G)|p$. 
In our specific case, the probability of finding an automorphism in $L_j$ (with $j \in \{1, 2\}$) within $i$ iterations is therefore at least $\frac{|\Aut(G_j)|\cdot i}{|\mathcal{T}(G_j)|}$. 

Consider running $2\sqrt{\frac{|\mathcal{T}(G_j)|}{|\Aut(G_j)|}}$ iterations of the algorithm. After $\sqrt{\frac{|\mathcal{T}(G_j)|}{|\Aut(G_j)|}}$ iterations, the probability for finding an automorphism in $\mathcal{T}(G_j)$ is at least $\frac{1}{\nicefrac{|\mathcal{T}(G_j)|}{|\Aut(G_j)|}}$.
This suffices to show that in expectation, the algorithm finds an automorphism after $2\sqrt{\frac{|\mathcal{T}(G_j)|}{|\Aut(G_j)|}}$ iterations. 
Repeating the argument $\ceil{-\log_2(\epsilon)}$ many times (to find all the necessary automorphisms for termination) shows the claimed runtime.
\end{proof}

\noindent Lemma~\ref{lem:runtime} shows that search trees are implicitly pruned using automorphisms: isomorphic copies of leaves actively contribute towards termination. 
In particular, in conjunction with Lemma~\ref{lem:full_auto}, we can see that the algorithm exploits \emph{all} automorphisms.
Philosophically, one can think of the random walks being performed on the quotient tree, however, one has to be aware that the sampling of children in the selected cell is not uniform.

\section{Engineering} \label{sec:practical}

The algorithm is implemented in \texttt{C++} and the implementation is called \dejavu{}.
The implementation would freely be available at \cite{webpage}. 

The implementation of our solver follows Algorithm~\ref{alg:random_iso} closely. It uses highly-engineered versions of the subroutines $\Refx$, $\Inv$ and $\Sel$. Their implementation is based on the algorithms from \cite{McKay201494} 
and in part even reverse-engineered from the source code of \Traces{}. 

To summarize, the refinement used is a version of the basic color refinement routine (see Algorithm~\ref{alg:refine}). 
Following the implementation of \Traces{}, it features several versions of the algorithm designed for different densities of graphs, among other optimizations. 
The cell selector $\Sel$ always picks the first largest cell of the coloring. 
We use a caching strategy to speedup the selection process.
We use different invariants depending on the specific use case, described below.

We now present further optimizations made in the implementation of Algorithm~\ref{alg:random_iso}. 
\subsection{Leaf Storage} \label{sec:practical:storage}
When storing and comparing leaves, we use a hash map in conjunction with an invariant. The invariant is analogous to the ones described in~\cite{McKay201494}, blending together most of the isomorphism-invariant information of a leaf into a single value. 

Leaves are then stored in a hashmap, using the invariant as the key value. Since we only store non-isomorphic leaves, isomorphic leaves do not produce collisions in the map. Collisions in this map for non-isomorphic leaves, empirically, are also very rare. The reason would be that the used invariant is almost a complete invariant. 

Storing the entire partition corresponding to a leaf (amounting to storing~$n$ numbers) for all of the explored leaves, quickly becomes very memory-intensive for many graphs. Therefore, the implementation only stores entire leaves up to a predetermined memory limit and then uses a cheaper method as follows. Instead of storing the entire leaf, only the individualized vertices are stored, i.e., the path taken through the tree. If the solver tries to derive automorphisms or isomorphisms from leaves at a later point, the path is taken again to recompute the coloring. Since the total number of leaves ever used for automorphism or isomorphism derivation is small (specifically at most $\ceil{-\log_2(\epsilon)}$), this cost is quickly amortized. For instances for which not many paths are computed overall, it is however highly beneficial to store the first few leaves in their entirety to prevent recomputation.

In our tests, this method was sufficient to conserve memory usage and the solver was never limited by memory (but rather time).

\subsection[k-deviation Trees]{$k$-deviation Trees} \label{sec:practical:deviation}
In the implementation, we add an additional step to the algorithm. 
Before performing the probabilistic bidirectional search as described in Algorithm~\ref{alg:random_iso}, we essentially perform the same algorithm but on a pruned tree. 
This helps the solver detect ``clearly non-isomorphic'' search trees more efficiently.

 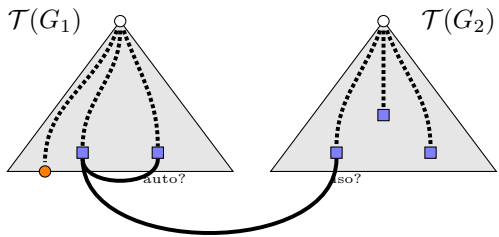
\begin{figure}
	\centering
	\begin{tikzpicture}[scale=0.5]
		\node (r0) at ( 0.0,  0.0) {}; 
		\node (s0) at (-3.0, -4.0) {}; 
		\node (s1) at ( 3.0, -4.0) {}; 
		\node (si1) at (-2.0, -4.0) {};
		
		\fill[fill=gray!20,draw] (r0.center)--(s0.center)--(s1.center)--(r0.center);
		
		\node (T1) at  (-2, 0) {$\mathcal{T}(G_1)$};
		
		\node (r01) at ( 0.0 + 7,  0.0) {}; 
		\node (s01) at (-3.0 + 7, -4.0) {}; 
		\node (s11) at ( 3.0 + 7, -4.0) {}; 
		\fill[fill=gray!20,draw] (r01.center)--(s01.center)--(s11.center)--(r01.center);
		
		\node (T2) at  (7 + 2, 0) {$\mathcal{T}(G_2)$};

		\node (si2) at (-1.0, -3.5) [draw,color=black,fill=lightblue,minimum size=1.5mm,inner sep=0pt] {};
		\node (si3) at ( 1.0, -3.5) [draw,color=black,fill=lightblue,minimum size=1.5mm,inner sep=0pt] {}; 
		\node (si4) at ( 5.75, -3.5) [draw,color=black,fill=lightblue,minimum size=1.5mm,inner sep=0pt] {}; 
		\node (si5) at ( 7.0, -2.5) [draw,color=black,fill=lightblue,minimum size=1.5mm,inner sep=0pt] {}; 
		\node (si6) at ( 8.25, -3.5) [draw,color=black,fill=lightblue,minimum size=1.5mm,inner sep=0pt] {}; 

		\draw[color=black, line width=1.5pt,] (si2) to [out=270, in=270] (si3) node [label=below:{\tiny \hspace{0.1cm} auto?}] {}; 
		\draw[color=black, line width=1.5pt, ] (si2) to [out=270, in=270] (si4) node [label=below:{\tiny \hspace{0.2cm} iso?}] {}; 
		
		\draw[color=black, line width=1.5pt,densely dotted] (r0) to [out=250, in=90] (si1); 
		\draw[color=black, line width=1.5pt,densely dotted] (r0) to [out=270, in=90] (si2); 
		\draw[color=black, line width=1.5pt,densely dotted] (r0) to [out=290, in=90] (si3); 
		
		\draw[color=black, line width=1.5pt,densely dotted] (r01) to [out=250, in=90] (si4); 
		\draw[color=black, line width=1.5pt,densely dotted] (r01) to [out=270, in=90] (si5); 
		\draw[color=black, line width=1.5pt,densely dotted] (r01) to [out=290, in=90] (si6); 
		
		\draw[color=black, fill=white]  (r01) circle (.15);
		\draw[color=black, fill=white]  (r0) circle (.15);
		
		\draw[color=black, fill=orange]  (si1) circle (.15);
	\end{tikzpicture}
	\caption{Probing on a pruned tree continuously compares the invariant to that of a target leaf $\tau$ (circular node). Once a deviation occurs (rectangular nodes), a (fake) leaf is established and the value of the deviation is compared to that of other fake leaves. Once the same deviation has occurred in both trees (a potential isomorphism), we switch to executing Algorithm~\ref{alg:random_iso} on the unpruned tree} \label{fig:algorithmdev}
\end{figure}

To do so, we first compute an arbitrary leaf in one of the trees, which we call the \emph{target leaf} $\tau$.
We then compute an invariant for $\tau$. 
In particular, we record a \emph{trace} invariant (as introduced by \Traces{}) while walking down the tree to $\tau$. 
The trace invariant records most of the isomorphism-invariant information \emph{during} the refinement steps, i.e., a trace of the computation is made. 
Essentially, the trace records the contents of the worklist $W$ of Algorithm~\ref{alg:refine}.
In subsequent walks, we then also record and compare this information while computing Algorithm~\ref{alg:refine}.
This enables a potential early-out: once the invariant of a subsequent walk deviates from the information recorded for $\tau$, we cannot find an occurrence of $\tau$. 

We now discuss how this can be exploited.
To describe the technique, we first define another node invariant, which we call the \emph{deviation value} $\Dev_{\Inv}\colon V^* \to \mathbb{N}^2 \cup \{\bot\}$. 
Consider a fixed trace $\Inv(\tau)$, which for our purposes will be the trace invariant of the target leaf $\tau$. 
The deviation value $\Dev_{\Inv}(\nu)$ for a node $\nu$ of the search tree is then defined as the pair~$(i,j)$ consisting of the position~$i$ where the traces first deviate and the corresponding value~$j$ in the trace $\Inv(\nu)$ that is different from $\Inv(\tau)$. 
If there are no differences, we set the deviation value to $\bot$ denoting ``no deviation''. 
Since the deviation value is a function of the invariant computed up until an isomorphism-invariant point, it is also naturally invariant under isomorphism.

Using these invariants, we then perform a variation of Algorithm~\ref{alg:random_iso}: as usual, we perform random walks.
However, as described above, we continuously record a trace invariant and compare it to $\Inv(\tau)$. 
Assume we are currently at node $\nu$ of the random walk.
If $\nu$ is not a leaf and the invariant does not deviate, i.e., $\Dev_{\Inv}(\nu) = \bot$, we continue walking down the tree.
If the invariant deviates, we stop Algorithm~\ref{alg:refine} early and record $\Dev_{\Inv}(\nu)$ as a (fake) leaf of the tree. 
Otherwise, we continue until the coloring becomes discrete and we reach an \emph{actual leaf}, i.e., a leaf of the underlying unpruned individualization-refinement search tree.

Note that the sets of leaves $L_i$ of Algorithm~\ref{alg:random_iso} can now contain, in addition to actual leaves of the search trees, inner nodes that deviated from $\tau$ (see Figure~\ref{fig:algorithmdev}).
If the algorithm finds automorphisms or isomorphisms of actual leaves we can proceed as usual.
In the case when the alleged automorphisms or isomorphisms only refer to deviations in inner nodes (fake leaves), we define the following behavior:
once $d(\epsilon)$ consecutive deviations occur within $\mathcal{T}_1$ or $\mathcal{T}_2$ (``automorphism'' of inner nodes), we conclude that graphs are non-isomorphic. 
If we ever find the same deviation in both trees (``isomorphism'' of inner nodes), we abort probing in the pruned tree and switch to Algorithm~\ref{alg:random_iso} on the unpruned tree.

While deviation values could be used precisely as described, in the implementation we use a slight but crucial variation. To make deviation values more distinct, it is sometimes beneficial to not use the early-out immediately. 
Instead, for a fixed constant $k$, color refinement is continued past the deviation for $k$ more cells of the worklist $W$, accumulating more information for the deviation value. 
The trade-off is as follows: if $k$ becomes larger, the early-out in color refinement is taken later, but deviation values become more distinct.

For a more global perspective note that we can actually apply Algorithm~\ref{alg:random_iso} on any isomorphism-invariant structure (in the technique just described an invariant subtree of the search tree). 
Specifically, here we try to sample on a cheaper and smaller structure that is however not always expressive enough to solve the problem effectively. By adjusting $k$ we shift the balance of this trade-off: if $k$ increases, so does the expressiveness and the cost of sampling.

\subsection{Blueprints} \label{sec:practical:blueprint}
Using an additional idea the technique described in the previous section can be performed more efficiently.

We introduce the concept of using the trace of the target leaf $\tau$ as a \emph{blueprint} for subsequent branches. 
When Algorithm~\ref{alg:refine} is computed, the trace records --- among other information --- all the information of the worklist $W$. 
Usually, for the deviation trees, this information is then checked for equality: if branches are isomorphic, this information must be equal due to isomorphism-invariance. 
However, we can also turn this observation around and treat the trace as a blueprint: if we assume branches are isomorphic, the trace already gives us the entire future worklist for a branch. 
Now, the way we exploit this idea is that when the trace is recorded, information is added as to whether a cell in $W$ was \emph{splitting} or \emph{non-splitting}. 
We consider a cell of the worklist splitting if Line~\ref{line:refine:split} produced new cells in the coloring for any connected cell. 
Essentially, we record whether the coloring was changed by $C$. 
If not, i.e., $C$ was non-splitting, we skip $C$ in the worklist of subsequent branches through the recorded information. 
Since $C$ did not manipulate the coloring we are guaranteed to get the correct stable coloring for isomorphic branches. 

Note that while $\tau$ is not chosen isomorphism-invariantly, it is fixed first and then used for both trees: hence any skipping of cells is actually performed in an isomorphism-invariant manner.
For non-isomorphic branches, the technique could on paper make the trace invariant weaker, but we did not observe any negative impact in our tests. 

\section{Benchmarks} \label{sec:benchmarks}
We compare running times with \nauty{}, \Traces{} and \conauto{}. \nauty{} and \Traces{} do not feature an ``isomorphism'' mode. 
Hence, we use canonical labeling on both input graphs and then compare the labeled graphs (as suggested in \cite{McKay:userguide}). 
We did preliminary testing using the automorphism mode and the disjoint union of graphs (as described in Section~\ref{sec:introduction}), which failed to be competitive with canonical labeling for most graph classes.
Furthermore, we compare with \conauto{}, a tool that does feature direct isomorphism testing. 
\conauto{} is however limited to a dense graph representation (adjacency matrices), implying that large graphs can often not be solved by it at all due to memory constraints. 
Therefore, we could not run \conauto{} on any graphs beyond order $3000$, which is why data points are missing.

For the sake of clarity, we omit the inclusion of the canonical labelling tool \bliss{} \cite{JunttilaKaski:ALENEX2007, DBLP:conf/tapas/JunttilaK11}. 
The results would however be similar to \nauty{} or \Traces{}:
we refer to \cite{McKay201494} for a comparison of \bliss{} to \nauty{}, \Traces{} and \conauto{}.

All benchmarks were run on an Intel Core i7 9700K processor with 16GB of RAM and Ubuntu 19.10. 
Error probability for \dejavu{} was set to below $1$\%, but since the error is one-sided there can be no erroneous results on non-isomorphic instances. 
The timeout is $60$ seconds. 
All results are given in milliseconds.

Whenever interesting non-isomorphic instances are available for all graph orders within a set, we give results for both isomorphic and non-isomorphic instances. 
Otherwise, we just give results for isomorphic instances. 
Overall, the results show, with few exceptions, clear overall improvements over state-of-the-art solvers.

We highlight several graph classes on which the running times show interesting behavior.

A very interesting case arises for random regular graphs (see Figure~\ref{fig:bench:ranreg}).
These graphs with $n$ nodes, say, result in search trees that almost surely have $n$ leaves immediately attached to the root.
Furthermore, they are asymmetric.
For the sake of argument, assume that color refinement runs in $\mathcal{O}(n)$ on these graphs. For the deterministic solvers, this in turn results in quadratic runtime, while \dejavu{} runs in $\mathcal{O}(n\sqrt{n})$.
\Traces{} however has a special strategy which is very effective on this set, namely the trace invariant. 
This enables it to abort computation for most of the leaves very early, resulting in quite modest quadratic runtime. 
In particular, it is still able to outperform \dejavu{} on the isomorphic instances of this benchmark set.
On the non-isomorphic instances, \dejavu{} does however also exploit the trace invariant through $k$-deviation trees.
This results in better runtimes on the set and we can clearly observe the asymptotic advantage, even over \Traces{}.

In Figure~\ref{fig:bench:multikef}, we can see that \dejavu{} is clearly outperformed by \conauto{} and \Traces{} on Kronecker eye flip graphs. 
By analyzing the search trees of these graphs, it becomes apparent that the \emph{invariant pruning} heuristic is very effective on these graphs: it is possible to cut off large parts of the search trees close to the root.
\dejavu{} currently does not exploit invariant pruning in this case
leading to inferior performance.
The surprising fact here is that this seems to be an exception: invariant pruning, which is a crucial tool of traditional deterministic solvers, does not seem to be required at all in the majority of benchmark sets for our probabilistic isomorphism test.

The remedy for graphs such as the Kronecker eye flip graphs is therefore obvious: by blending in some breadth-first traversal with invariant pruning (such as done in \Traces{}), we could boost performance on these graphs greatly. This remains as future work. 

\newcommand{\plotScale}[0]{0.525}

\begin{figure}
	\centering \includegraphics{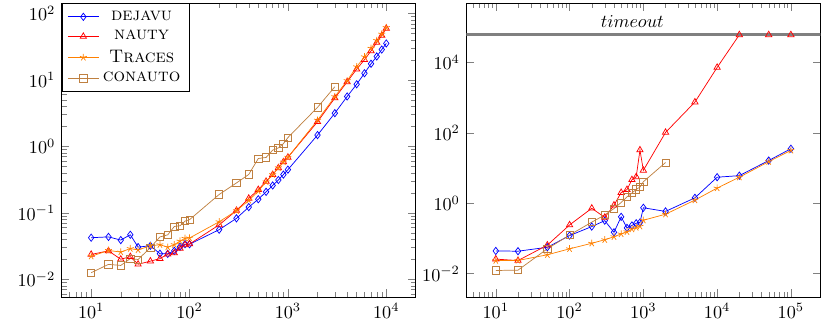}
	\caption{Isomorphic pairs random graphs with $\frac{1}{10}$ edge probability (left) and random trees (right).} \label{fig:ran}
\end{figure}

\begin{figure}
	\centering \includegraphics{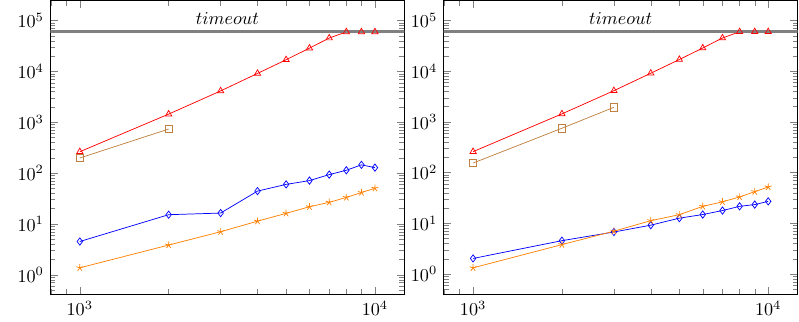}
	\caption{Isomorphic pairs random 3-regular graphs (left) and non-isomorphic pairs (right).} \label{fig:bench:ranreg}
\end{figure}

\begin{figure}
	\centering \includegraphics{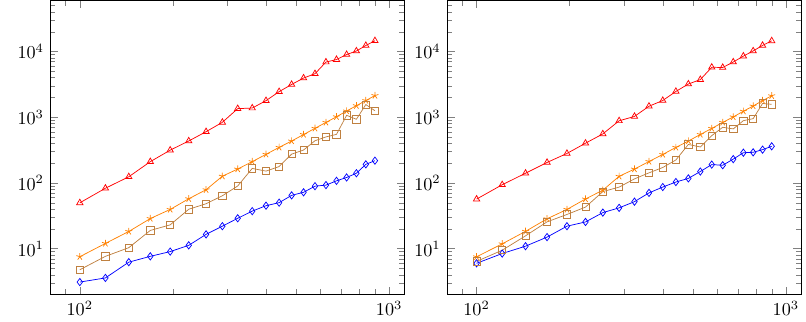}
	\caption{Isomorphic pairs of Latin square graphs with switched edges (left) and non-isomorphic pairs (right).} \label{fig:bench:latin}
\end{figure}

\begin{figure}
	\centering \includegraphics{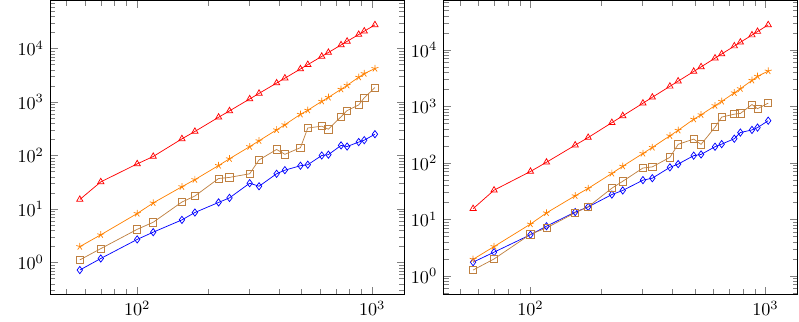}
	\caption{Isomorphic pairs of Steiner triple system graphs with switched edges (left) and non-isomorphic pairs (right).} \label{fig:bench:steiner}
\end{figure}

\begin{figure}
	\centering \includegraphics{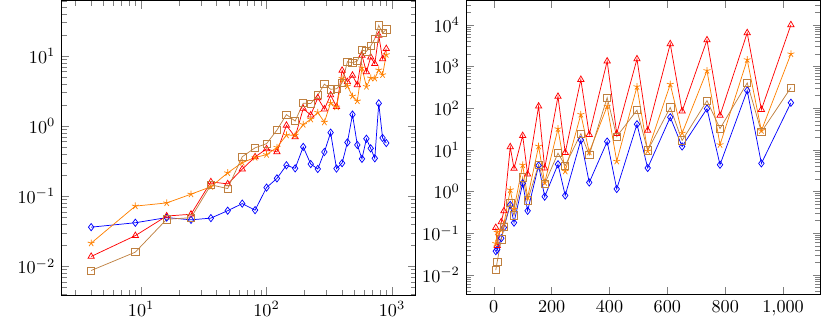}
	\caption{Isomorphic pairs of Latin square graphs (left) and Steiner triple systems (right).} \label{fig:bench:latinsteiner}
\end{figure}

\begin{figure}
	\centering \includegraphics{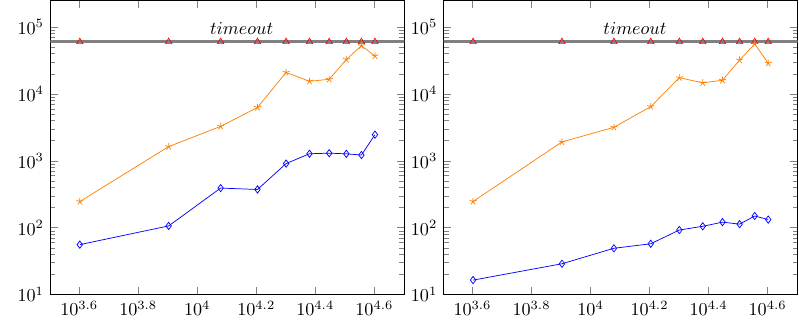}
	\caption{Isomorphic pairs of CFI graphs (left) and non-isomorphic pairs (right).} \label{fig:bench:cfi}
\end{figure}

\begin{figure}
	\centering \includegraphics{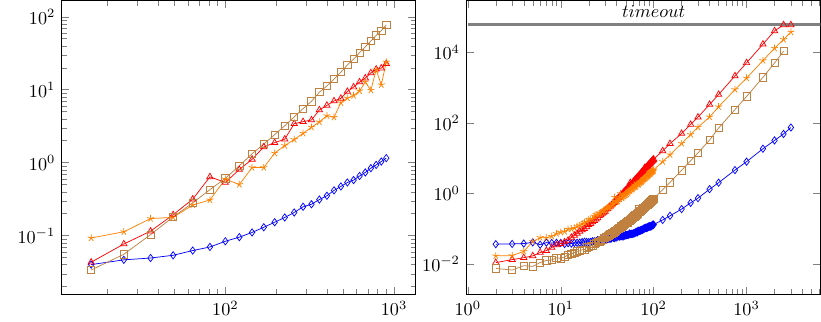}
	\caption{Isomorphic pairs of lattice graphs (left) and complete graphs (right).} \label{fig:bench:easysym}
\end{figure}

\begin{figure}
	\centering \includegraphics{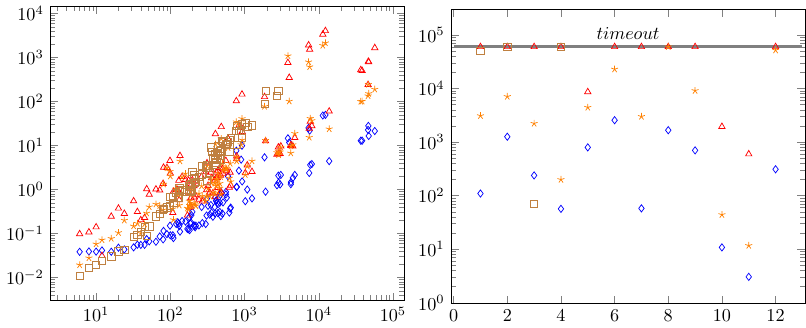}
	\caption{Isomorphic pairs of vertex transitive graphs (left) and combinatorial graphs (right).} \label{fig:bench:trancomb}
\end{figure}

\begin{figure}
	\centering \includegraphics{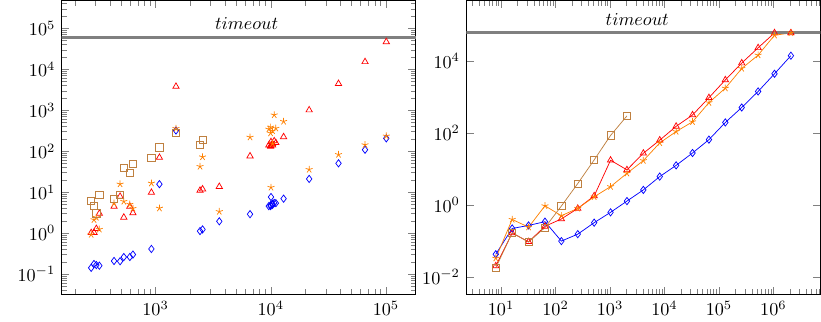}
	\caption{Isomorphic pairs of SAT formula model graphs (left) and hypercubes (right).} \label{fig:bench:dachyper}
\end{figure}

\begin{figure}
	\centering \includegraphics{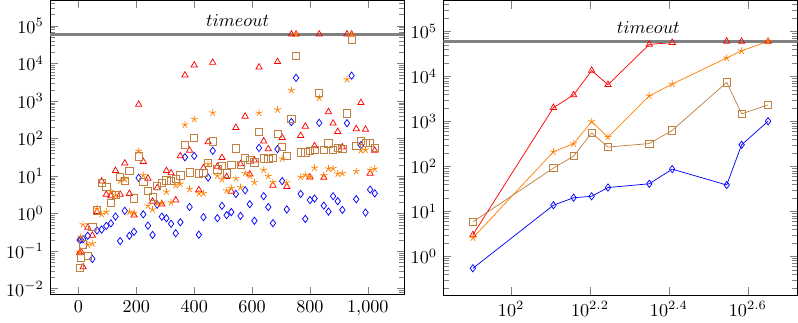}
	\caption{Isomorphic pairs of Hadamard matrices (left) and Hadamard matrices with switched edges (right).} \label{fig:bench:had}
\end{figure}

\begin{figure}
	\centering \includegraphics{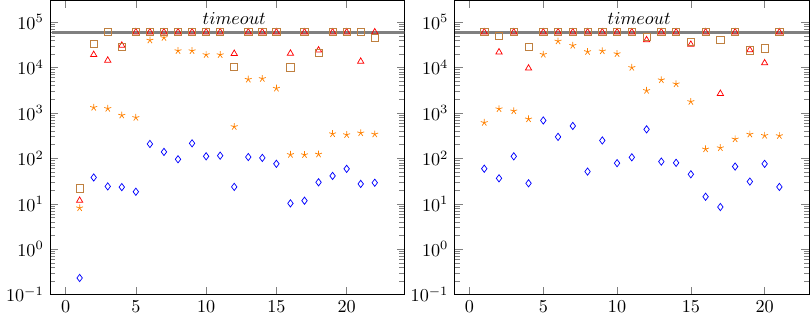}
	\caption{Isomorphic pairs of uncolored projective planes of order 16 (left) and (mostly) non-isomorphic pairs (right).} \label{fig:bench:pp16}
\end{figure}

\begin{figure}
	\centering \includegraphics{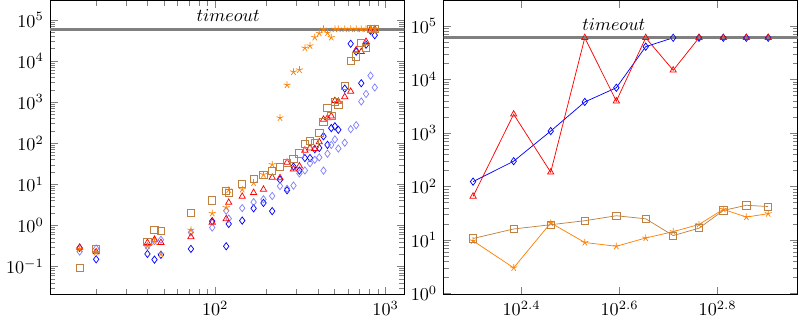}
	\caption{Isomorphic pairs of shrunken multipedes (left) and Kronecker eye flip graphs (right). The light blue plot on the left graph shows \dejavu{} with the \emph{smallest} cell selector, which is more comparable to the selector used by \conauto{} and \nauty{}.} \label{fig:bench:multikef}
\end{figure}

\section{Conclusions and Future Work} \label{sec:conclusions}
We have designed an isomorphism test that is faster than other state-of-the-art solutions for many graph classes. 
By using a probabilistic approach, the algorithm runs in time sub-linear in the size of the search tree. 
Furthermore, it completely avoids collecting symmetries, a novelty among state-of-the-art solvers.

While the experimental data clearly demonstrate superior performance of \dejavu{} over the other solvers on a wide variety of graphs, we think that this mostly shows that canonical labeling solvers are not the right tool for the job of isolated isomorphism testing. 
Our solver clearly gains advantages by \emph{not} computing canonical labelings and automorphism groups, while other solvers do so.
Overall this suggests that with the currently available algorithmic approaches, in practice, isomorphism testing is substantially easier than automorphism group computation and canonical labeling.

Future improvements to the tool can include blending breadth-first search with probabilistic bidirectional search to enable invariant pruning on the first few levels of the search tree. 

Regarding errors due to the randomized computation, a Las Vegas approach described in \cite{theorypaper} could eliminate errors completely. In fact that approach satisfies similar worst case bounds as the Monte-Carlo algorithm on which our algorithm is based.
The Las Vegas algorithm does however not seem to inherit the implicit automorphism group exploitation enjoyed by the Monte Carlo approach explored in this paper.  

\newpage
\bibliography{main}

\begin{thebibliography}{10}

\bibitem{webpage}
dejavu.
\newblock \url{http://alg.cs.uni-kl.de/en/team/anders/dejavu/}.

\bibitem{nautyTracesweb}
{nauty and Traces}.
\newblock \url{http://pallini.di.uniroma1.it}.

\bibitem{theorypaper}
Markus Anders and Pascal Schweitzer.
\newblock Search problems in trees with symmetries: near optimal traversal
  strategies for individualization-refinement algorithms.
\newblock {\em CoRR}, abs/2011.01726, 2020.
\newblock arXiv.

\bibitem{DBLP:conf/stoc/Babai16}
L{\'{a}}szl{\'{o}} Babai.
\newblock Graph isomorphism in quasipolynomial time [extended abstract].
\newblock In Daniel Wichs and Yishay Mansour, editors, {\em Proceedings of the
  48th Annual {ACM} {SIGACT} Symposium on Theory of Computing, {STOC} 2016,
  Cambridge, MA, USA, June 18-21, 2016}, pages 684--697. {ACM}, 2016.

\bibitem{DBLP:conf/stoc/Babai19}
L{\'{a}}szl{\'{o}} Babai.
\newblock Canonical form for graphs in quasipolynomial time: preliminary
  report.
\newblock In Moses Charikar and Edith Cohen, editors, {\em Proceedings of the
  51st Annual {ACM} {SIGACT} Symposium on Theory of Computing, {STOC} 2019,
  Phoenix, AZ, USA, June 23-26, 2019}, pages 1237--1246. {ACM}, 2019.

\bibitem{Darga:2004:ESS:996566.996712}
Paul~T. Darga, Mark~H. Liffiton, Karem~A. Sakallah, Igor~L. Markov, and Igor~L.
  Markov.
\newblock Exploiting structure in symmetry detection for {CNF}.
\newblock In {\em Proceedings of the 41st Annual Design Automation Conference},
  DAC '04, pages 530--534, New York, NY, USA, 2004. ACM.

\bibitem{JunttilaKaski:ALENEX2007}
Tommi Junttila and Petteri Kaski.
\newblock Engineering an efficient canonical labeling tool for large and sparse
  graphs.
\newblock In {\em ALENEX'07: Proceedings of the Ninth Workshop on Algorithm
  Engineering and Experiments}, pages 135--149, New Orleans, USA, 2007. SIAM.

\bibitem{DBLP:conf/tapas/JunttilaK11}
Tommi~A. Junttila and Petteri Kaski.
\newblock Conflict propagation and component recursion for canonical labeling.
\newblock In Alberto Marchetti{-}Spaccamela and Michael Segal, editors, {\em
  Theory and Practice of Algorithms in (Computer) Systems - First International
  {ICST} Conference, {TAPAS} 2011, Rome, Italy, April 18-20, 2011.
  Proceedings}, volume 6595 of {\em Lecture Notes in Computer Science}, pages
  151--162. Springer, 2011.

\bibitem{DBLP:conf/birthday/KatebiSM12}
Hadi Katebi, Karem~A. Sakallah, and Igor~L. Markov.
\newblock Graph symmetry detection and canonical labeling: Differences and
  synergies.
\newblock In Andrei Voronkov, editor, {\em Turing-100 - The Alan Turing
  Centenary, Manchester, UK, June 22-25, 2012}, volume~10 of {\em EPiC Series
  in Computing}, pages 181--195. EasyChair, 2012.

\bibitem{DBLP:conf/alenex/KutzS07}
Martin Kutz and Pascal Schweitzer.
\newblock Screwbox: a randomized certifying graph-non-isomorphism algorithm.
\newblock In {\em Proceedings of the Nine Workshop on Algorithm Engineering and
  Experiments, {ALENEX} 2007, New Orleans, Louisiana, USA, January 6, 2007}.
  {SIAM}, 2007.

\bibitem{conauto:webpage}
Jos\'e~Luis L\'opez-Presa, Antonio~Fern\'andez Anta, and Luis~N. Chiroque.
\newblock Conauto2.
\newblock {\\ \href{https://sites.google.com/site/giconauto/}{\texttt
  {https://sites.google.com/site/giconauto/}}}.

\bibitem{DBLP:conf/wea/Lopez-PresaCA13}
Jos{\'{e}}~Luis L{\'{o}}pez{-}Presa, Luis~N{\'{u}}{\~{n}}ez Chiroque, and
  Antonio~Fern{\'{a}}ndez Anta.
\newblock Novel techniques for automorphism group computation.
\newblock In Vincenzo Bonifaci, Camil Demetrescu, and Alberto
  Marchetti{-}Spaccamela, editors, {\em Experimental Algorithms, 12th
  International Symposium, {SEA} 2013, Rome, Italy, June 5-7, 2013.
  Proceedings}, volume 7933 of {\em Lecture Notes in Computer Science}, pages
  296--307. Springer, 2013.

\bibitem{DBLP:journals/ipl/Mathon79}
Rudolf Mathon.
\newblock A note on the graph isomorphism counting problem.
\newblock {\em Inf. Process. Lett.}, 8(3):131--132, 1979.

\bibitem{McKay:userguide}
Brendan~D. McKay and Adolfo Piperno.
\newblock Nauty and traces user guide.
\newblock {\\
  \href{https://cs.anu.edu.au/people/Brendan.McKay/nauty/nug25.pdf}{\texttt
  {https://cs.anu.edu.au/people/Brendan.McKay/ nauty/nug25.pdf}}}.

\bibitem{McKay201494}
Brendan~D. McKay and Adolfo Piperno.
\newblock Practical graph isomorphism, {II}.
\newblock {\em Journal of Symbolic Computation}, 60(0):94--112, 2014.

\bibitem{MR1970241}
\'{A}kos Seress.
\newblock {\em Permutation group algorithms}, volume 152 of {\em Cambridge
  Tracts in Mathematics}.
\newblock Cambridge University Press, Cambridge, 2003.

\end{thebibliography}
\bibliographystyle{plain}

\end{document}